\documentclass[11pt,a4paper]{article}

\usepackage{fullpage}
\usepackage{times}

\usepackage{color}
\usepackage{amssymb,amsmath,stmaryrd}
\usepackage[lined,boxed,commentsnumbered]{algorithm2e}


\newtheorem{theorem}{Theorem}

\newtheorem{lemma}[theorem]{Lemma}

\newtheorem{corollary}[theorem]{Corollary}

\newcommand{\sq}{\hbox{\rlap{$\sqcap$}$\sqcup$}}
\newcommand{\qed}{\hspace*{\fill}\sq}
\newenvironment{proof}{\noindent {\bf Proof.}\ }{\qed\par\vskip 4mm\par}


%

\def\Acoord{\texttt{ACOORD}}
\def\Bcoord{\texttt{BCOORD}}
\def\Ccoord{\texttt{CCOORD}}
\def\Dcoord{\texttt{DCOORD}}
\def\balance{\texttt{Balance}}

\newcommand{\opt}[1]{ {\widetilde{#1}} }

\newcommand{\norm}[1]{{\Vert#1\Vert}}

\sloppy
\allowdisplaybreaks

\begin{document}

\title{\bf An almost ideal coordination mechanism\\for unrelated machine scheduling\thanks{
This work was partially supported by the Caratheodory grant E.114 from the University of Patras and the project ANR-14-CE24-0007-01 \emph{``CoCoRICo-CoDec"}. Part of the work was done while the second author was visiting the Institute for Mathematical Sciences, National University of Singapore in 2015.}
}

\author{Ioannis Caragiannis\thanks{Computer Technology Institute ``Diophantus'' \& Department of Computer Engineering and Informatics, University of Patras, 26504 Rion, Greece. Email: {\tt caragian@ceid.upatras.gr}} \and Angelo Fanelli\thanks{CNRS (UMR-6211), Universit\'e de Caen Basse-Normandie, France. Email: {\tt angelo.fanelli@unicaen.fr}}}


\date{}

\maketitle

\begin{abstract}
Coordination mechanisms aim to mitigate the impact of selfishness when scheduling jobs to different machines. Such a mechanism defines a scheduling policy within each machine and naturally induces a game among the selfish job owners. The desirable properties of a coordination mechanism includes simplicity in its definition and efficiency of the outcomes of the induced game. We present a broad class of coordination mechanisms for unrelated machine scheduling that are simple to define and we identify one of its members (mechanism \Dcoord) that is superior to all known mechanisms. \Dcoord\ induces potential games with logarithmic price of anarchy and only constant price of stability. Both bounds are almost optimal.
\end{abstract}

\section{Introduction}\label{sec:intro}
We consider a selfish scheduling setting where each job owner acts as a non-cooperative player and aims to assign her job to one of the available machines so that the completion time of the job is as low as possible. An algorithmic tool that can be utilized by the designer of such a system is a {\em coordination mechanism} \cite{CKN09}. The coordination mechanism uses a {\em scheduling policy} within each machine that aims to mitigate the impact of selfishness to performance.

We focus on {\em unrelated machine scheduling}. There are $m$ available machines and $n$ players, each controlling a distinct job. The job (owned by player) $u$ has a (possibly infinite) positive processing time (or load) $w_{u,j}$ when processed by machine $j$. A scheduling policy defines the way jobs are scheduled within a machine. In its simplest form, such a policy is {\em non-preemptive} and processes jobs uninterruptedly according to some order. {\em Preemptive} scheduling policies (which is our focus here) do not necessarily have this feature (e.g., they may process jobs in parallel) and may even introduce some idle time.

Naturally, a coordination mechanism induces a game with the job owners as players. Each player has all machines as possible {\em strategies}. The term {\em assignment} is used for a snapshot of the game, where each player has selected a strategy, i.e., it has selected a particular machine to process her job. Given an assignment, the cost a player experiences is the completion time of her job on the machine she has selected. This is well-defined by the scheduling policy of the machine and typically depends on the characteristics of all jobs assigned to the machine. 

Assignments in which no player has any incentive to change her strategy are called {\em pure Nash equilibria} (or, simply, {\em equilibria}). When studying a coordination mechanism, we are interested in bounding the inefficiency of equilibria of the game induced by the mechanism. We use the {\em maximum completion time} among all jobs to measure the social cost. A related quantity is the load of a machine which is defined as the total processing time of the jobs assigned to the machine. The {\em makespan} of an assignment is the maximum load over all machines. Clearly, the makespan of an assignment is a lower bound on the maximum completion time. The {\em price of anarchy} (respectively, {\em price of stability}) of the game induced by a coordination mechanism is defined as the worst (respectively, best) ratio of the maximum completion time over all equilibria over the optimal makespan. 

We prefer mechanisms that induce games that always have equilibria. Furthermore, we would like these equilibria to be easy to find. A highly desirable property that ensures that equilibria can be reached by the players themselves (with best-response play) is the existence of a {\em potential function}. A potential function is defined over all possible assignments and has the property that, in any two assignments differing in the strategy of a single player, the difference of the two values of the potential and the difference of the completion time of the deviating player have the same sign.

Coordination mechanisms for scheduling were introduced by Christodoulou et al.~\cite{CKN09}. Immorlica et al.~\cite{ILMS09} were the first to consider coordination mechanisms in the unrelated machine setting and studied several intuitive mechanisms, including \texttt{ShortestFirst} and \texttt{Makespan}. In \texttt{ShortestFirst}, the jobs in each machine are scheduled non-preemptively, in monotone non-decreasing order of their processing time. Since ties are possible, the mechanism has to distinguish between jobs with identical processing times, e.g., using distinct IDs for the jobs. This is necessary for every deterministic non-preemptive coordination mechanism in order to be well-defined. In contrast, in \texttt{Makespan}, each machine processes the jobs assigned to it ``in parallel'' so that they all have the same completion time. So, no ID information is required by \texttt{Makespan}. We use the term {\em anonymous} to refer to coordination mechanisms having this property. These two coordination mechanisms are {\em strongly local} in the sense that the only information that is required to compute the schedule of jobs within a machine is the processing time of the jobs on that machine only. A {\em local} coordination mechanism may use all parameters of the jobs that are assigned to a machine (e.g., the whole load vector of each job).

Azar et al.~\cite{AFJ+15} prove lower bounds of $\Omega(m)$ and $\Omega(\log{m})$ on the price of anarchy for any strongly local and local non-preemptive coordination mechanism, respectively. On the positive side, they presented two local coordination mechanisms with price of anarchy $o(m)$. Their first coordination mechanism (henceforth called \texttt{AFJMS-1}) is non-preemptive and may induce game without equilibria. When the induced game has equilibria, the price of anarchy is at most $O(\log{m})$. Their second coordination mechanism (henceforth called \texttt{AFJMS-2} is preemptive, induces potential games, and has price of anarchy $O(\log^2{m})$. Both mechanisms are not anonymous.

Caragiannis \cite{C09} presents three more coordination mechanisms. The mechanism \Acoord, induces potential games with price of anarchy $O(\log{m})$. The mechanism uses the distinct IDs of the jobs to ensure that the equilibria of the game are essentially assignments that are reached by a greedy-like online algorithm for minimizing the $p$-norm of machine loads. \cite{AAG+95} and \cite{C08} study this online scheduling problem; the results therein imply that the price of stability of mechanism \Acoord\ is $\Omega(\log{m})$ as well. A different coordination mechanism with similar characteristics (called \balance) is presented in \cite{CDT12}. The coordination mechanism \Bcoord\ (defined also in \cite{C09}) has even better price of anarchy $O\left(\frac{\log{m}}{\log\log{m}}\right)$ (matching a lower bound due to Abed and Huang \cite{AbedH12} for all deterministic coordination mechanisms) but the induced games are not potential ones and may not even have equilibria. However, the price of anarchy bound for \Bcoord\ indicates that preemption may be useful in order to beat the $\Omega(\log{m})$ lower bound for non-preemptive mechanisms from \cite{AFJ+15}. Interestingly, this mechanism is anonymous. The third mechanism \Ccoord\ is anonymous as well, induces potential games, and has price of anarchy and price of stability $O(\log^2m)$ and $O(\log m)$, respectively. To the best of our knowledge, this is the only anonymous mechanism that induces potential games and has polylogarithmic price of anarchy.\footnote{Even though their mechanism \balance\ heavily uses job IDs, Cohen et al.~\cite{CDT12} claim that it is anonymous. This is certainly false according to our terminology since anonymity imposes that two jobs with identical load vectors should be indistinguishable.} Table \ref{tab} summarizes the known local coordination mechanisms.

  \begin{table}[h!]
 \centering
 \caption{A comparison between \Dcoord\ and other local coordination mechanisms from the literature.}
 \begin{tabular}{l l l l l l l l}
\hline
Coordination & PoA & PoS & PNE & Pot.  & IDs & Preempt. & Reference \\
mechanism & & & & & &  & \\
\hline
\hline
\texttt{AFJMS-1} & $\Theta(\log m)$ & - & No & No  & Yes & No & \cite{AFJ+15}\\
\hline
\texttt{AFJMS-2} & $O(\log^2 m)$ & - & Yes & Yes & Yes & Yes & \cite{AFJ+15} \\
\hline
\Acoord & $O(\log m)$ & $\Theta(\log{m})$ & Yes & Yes  & Yes & Yes & \cite{C09} \\
\hline
\balance & $O(\log m)$ & $\Theta(\log{m})$ & Yes & Yes  & Yes & Yes & \cite{CDT12} \\
\hline
\Bcoord & $\Theta(\frac{\log m}{\log \log m})$& - & ? & No  & No & Yes & \cite{C09} \\
\hline
\Ccoord & $O(\log^2 m)$ & $O(\log{m})$ & Yes & Yes  & No & Yes & \cite{C09}\\
\hline
\Dcoord & ${O(\log m)}$ & $O(1)$ & Yes & Yes  & No & Yes & this paper\\
\hline
 \end{tabular}
 \label{tab}
 \end{table}

In the discussion above, we have focused on papers that define the social cost as the maximum completion time (among all players). An alternative social cost that has received much attention is the {\em weighted average completion time}; see \cite{ACH14,BIKM14,CCG+15} for some recent related results. Interestingly, the design principles that lead to efficient mechanisms in their case are considerably different.

Our contribution is as follows. We introduce a quite broad class (called ${\cal M}(d)$) of local anonymous coordination mechanisms that induce potential games. The class contains the coordination mechanism \Ccoord\ as well as the novel coordination mechanism \Dcoord, which has additional {\em almost ideal} properties. In particular, we prove that it has logarithmic price of anarchy and only constant price of stability. A (qualitative and quantitative) comparison of \Dcoord\ to other known local coordination mechanisms is depicted in Table \ref{tab}. 

The rest of the paper is structured as follows. We begin with preliminary definitions in Section \ref{sec:prelim}. Section \ref{sec:broad} is devoted to the definition of the class of mechanisms ${\cal M}(d)$ and to the proof that all mechanisms in this class induce potential games. Then, the novel mechanism \Dcoord\ from this class is defined in Section \ref{sec:dcoord}; its feasibility as well as preliminary statements that are useful for the analysis are also presented there. Finally, in Section \ref{sec:anal}, we prove the bounds on the price of anarchy and stability.

\section{Definitions and preliminaries}\label{sec:prelim}
Throughout the paper, we denote the number of machines by $m$. The index $j$ always refers to a machine; the sum $\sum_j$ runs over all available machines. An assignment is a partition $N=(N_1, ..., N_m)$ of the players to the $m$ machines. So, $N_j$ is the set of players assigned to machine $j$ under $N$. We use the notation $L_j(N_j)$ to refer to the load of machine $j$, i.e., $L_j(N_j)=\sum_{u\in N_j}{w_{u,j}}$.

A coordination mechanism uses a scheduling policy per machine. For every set of jobs assigned to machine $j$, the scheduling policy of the machine defines a detailed schedule of the jobs in the machine, i.e., it defines which job is executed in each point in time, whether more than one jobs are executed in parallel, or whether a machine stays idle for particular time intervals. Instead of defining coordination mechanisms at this level of detail, it suffices to focus on the definition of the completion time ${\cal P}(u,N_j)$ for the job of each player $u\in N_j$. This definition should correspond to some feasible detailed scheduling of jobs in the machine. A sufficient condition that guarantees {\em feasibility} is to define completion times that are never smaller than the machine load.

Like the coordination mechanisms in \cite{AFJ+15,C09,CDT12}, our coordination mechanisms are local. The completion time ${\cal P}(u,N_j)$ of the job belonging to player $u$ in machine $j$ depends on the processing times the jobs in $N_j$ have on machine $j$, as well as on the minimum processing time $w_{u}=\min_j{w_{u,j}}$ of job $u$ over all machines.

Our proofs exploit simple facts about Euclidean norms of machine loads. Recall that, for $p\geq 1$, the $p$-norm of the vector of machine loads $L(N)=(L_1(N_1),L_2(N_2), ..., L_m(N_m))$ under an assignment $N$ is $\norm{L(N)}_p=\left(\sum_j{L_j(N_j)^p}\right)^{1/p}$. By convention, we denote the makespan $\max_{j}{L_j(N_j)}$ as $\norm{L(N)}_\infty$. The following property follows easily by the definition of norms; we use it extensively in the following.

\begin{lemma}\label{lem:norm-max}
For any $p\geq 1$ and any assignment $N$, $\norm{L(N)}_\infty\leq \norm{L(N)}_p \leq m^{1/p}\norm{L(N)}_\infty$.
\end{lemma}
We also use the well-known Minkowski inequality (or triange inequality for the $p$-norm). For machine loads, it reads as follows:
\begin{lemma}[Minkowski inequality]\label{lem:minkowski} 
For every $p \geq 1$ and two assignments $N$ and $N'$, $\norm{L(N)+L(N')}_p\leq \norm{L(N)}_p+\norm{L(N')}_p$. 
\end{lemma}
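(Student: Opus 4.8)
The statement is exactly the classical Minkowski inequality applied to the two coordinatewise non-negative vectors $L(N)$ and $L(N')$ in $\R^m$, so the plan is to reproduce the standard proof while taking advantage of the fact that all entries are non-negative (which removes any need for absolute values). Write $a=L(N)$ and $b=L(N')$, so that the $j$-th coordinate of $L(N)+L(N')$ is $a_j+b_j\geq 0$.

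First I would dispose of the boundary cases. For $p=1$ the claim is immediate, since $\norm{a+b}_1=\sum_j(a_j+b_j)=\sum_j a_j+\sum_j b_j=\norm{a}_1+\norm{b}_1$. Similarly, under the convention $\norm{\cdot}_\infty=\max_j$ the case $p=\infty$ follows from $\max_j(a_j+b_j)\leq \max_j a_j+\max_j b_j$. I would also set aside the degenerate case $\norm{a+b}_p=0$, where the inequality holds trivially.

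For the main range $1<p<\infty$ the plan is the textbook argument based on Hölder's inequality with conjugate exponent $q=p/(p-1)$. Starting from $\norm{a+b}_p^p=\sum_j (a_j+b_j)^p$, I would split the factor as $(a_j+b_j)^p=(a_j+b_j)(a_j+b_j)^{p-1}$ and distribute to obtain $\sum_j a_j(a_j+b_j)^{p-1}+\sum_j b_j(a_j+b_j)^{p-1}$. Applying Hölder to each of these two sums, pairing $a_j$ (respectively $b_j$) with $(a_j+b_j)^{p-1}$, and using $(p-1)q=p$ together with $p/q=p-1$, each sum is bounded by $\norm{a}_p\,\norm{a+b}_p^{p-1}$ (respectively $\norm{b}_p\,\norm{a+b}_p^{p-1}$). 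This yields $\norm{a+b}_p^{p}\leq(\norm{a}_p+\norm{b}_p)\,\norm{a+b}_p^{p-1}$, and dividing through by the positive quantity $\norm{a+b}_p^{p-1}$ gives the claim.

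The only genuine content is Hölder's inequality itself, which is the step I would expect to be the main obstacle if one insisted on a fully self-contained argument; it in turn follows from Young's inequality (equivalently, from the concavity of the logarithm, i.e.\ weighted AM--GM). Since the lemma is labelled well-known, however, I would simply invoke Hölder's inequality as a standard fact and let the exponent bookkeeping above carry the proof to completion.
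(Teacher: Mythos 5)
Your proof is correct: it is the standard H\"older-based derivation of Minkowski's inequality, with the boundary cases ($p=1$ and a zero sum vector) properly handled and the exponent bookkeeping $(p-1)q=p$ done right. The paper itself offers no proof at all---it invokes the lemma as a classical fact, exactly as you anticipated one could---so your writeup simply supplies the textbook argument the paper implicitly relies on.
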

The notation $L(N)+L(N')$ denotes the $m$-entry vector with $L_j(N_j)+L_j(N'_j)$ at the $j$-th entry. Another necessary technical lemma follows by the convexity properties of polynomials; see \cite{C09} for a proof. 
\begin{lemma}\label{lem:conv}
For $r, t \geq 0$, positive integer $p$, and $a_i\geq 0$ for $i=1, ..., p$, it holds 
\begin{eqnarray*}
\sum_{i=1}^k{\left((t+a_i)^r-t^r\right)} &\leq & \left(t+\sum_{k=1}^p{a_i}\right)^r-t^r.
\end{eqnarray*}
\end{lemma}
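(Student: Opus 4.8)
The plan is to recognize the inequality as a \emph{superadditivity} statement for the function $f(x)=(t+x)^r-t^r$: writing $a=\sum_{i}a_i$ on the right, we must show $\sum_i f(a_i)\le f(a)$, where $f(0)=0$. Two routes suggest themselves; I would take the telescoping one, since it avoids appealing to convexity directly and makes the role of $r\ge 1$ transparent.

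First I would introduce the partial sums $S_0=0$ and $S_j=\sum_{i=1}^j a_i$, so that $S_p=\sum_{i=1}^p a_i$, and rewrite the right-hand side as a telescoping sum
\begin{eqnarray*}
\left(t+\sum_{i=1}^p a_i\right)^r-t^r &=& \sum_{j=1}^p\left[(t+S_j)^r-(t+S_{j-1})^r\right].
\end{eqnarray*}
Writing the left-hand side as $\sum_{j=1}^p\left[(t+a_j)^r-t^r\right]$ and comparing the two sums term by term, it then suffices to prove, for each $j$ (using $S_j=S_{j-1}+a_j$),
\begin{eqnarray*}
(t+a_j)^r-t^r &\le& (t+S_{j-1}+a_j)^r-(t+S_{j-1})^r.
\end{eqnarray*}

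The crux is therefore the following monotonicity fact: for fixed $a\ge 0$ and $r\ge 1$, the increment $h(x)=(x+a)^r-x^r$ is non-decreasing in $x$ on $[0,\infty)$. I would establish this by differentiation, $h'(x)=r\left[(x+a)^{r-1}-x^{r-1}\right]\ge 0$, which holds because $x\mapsto x^{r-1}$ is non-decreasing when $r\ge 1$ (equivalently, one may invoke the convexity of $x\mapsto x^r$). Granting this, the single-term bound is immediate: its left side is $h(t)$ and its right side is $h(t+S_{j-1})$, and $t+S_{j-1}\ge t$ since all $a_i\ge 0$. Summing the $p$ term-by-term inequalities and undoing the telescope yields the claim.

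The only real obstacle is the regime of $r$: the argument above needs $r\ge 1$, which is exactly the relevant case here, since the lemma is later applied with $r$ a norm exponent at least $1$. For $0\le r<1$ the map $x\mapsto x^r$ is concave and the inequality in fact reverses, so one cannot hope for a uniform proof over all $r\ge 0$; I would restrict attention to $r\ge 1$. A fully equivalent alternative is to note that $f(x)=(t+x)^r-t^r$ is convex with $f(0)=0$, prove the two-term case $f(x)+f(y)\le f(x+y)$ by expressing $x$ and $y$ as convex combinations of $0$ and $x+y$, and then iterate over the $p$ summands by induction on $p$.
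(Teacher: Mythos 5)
Your proof is correct, and there is nothing in the paper to weigh it against: the paper does not prove this lemma at all, remarking only that it ``follows by the convexity properties of polynomials'' and deferring to \cite{C09}. Your telescoping decomposition over the partial sums $S_j$, reduced to the monotonicity of the increment $h(x)=(x+a)^r-x^r$ (which you derive from the convexity of $x\mapsto x^r$), is precisely a rigorous instantiation of the convexity route the paper gestures at, and it is the standard proof of this superadditivity fact; your alternative suggestion (convexity of $f(x)=(t+x)^r-t^r$ with $f(0)=0$, plus induction on $p$) is equally valid. You also correctly flag a genuine defect in the statement as written: under the stated hypothesis $r\geq 0$ the inequality is false for $0<r<1$ (e.g.\ $t=0$, $p=2$, $a_1=a_2=1$, $r=1/2$ gives $2\not\leq \sqrt{2}$), so the hypothesis should read $r\geq 1$. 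This is harmless for the paper, which only invokes the lemma with exponent $r=d+1\geq 3$, but your restriction is the right correction to the statement.
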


\section{A broad class of coordination mechanisms}\label{sec:broad}
In this section, we show that the coordination mechanism \Ccoord\ from \cite{C09} can be thought of as belonging to a broad class of coordination mechanisms, which we call ${\cal M}(d)$. This class contains also our novel coordination mechanism \Dcoord, which will be presented in Section \ref{sec:dcoord}.

The definition of \Ccoord\ uses a positive integer $d\geq 2$ and the functions $\Psi_j$ that map sets of players to the reals as follows. For any machine $j$, $\Psi_j(\emptyset)=0$ and for any non-empty set of players $U=\{u_1, u_2, ..., u_\ell\}$, 
$$\Psi_j(U) = d! \sum_{t_1 +t_2+...+t_\ell=d}{\prod_{k=1}^\ell{w_{u_k,j}^{t_k}}}.$$
The sum runs over all multi-sets of non-negative integers $\{t_1, t_2, ..., t_\ell\}$ that satisfy $t_1+t_2+...+t_\ell=d$. So, $\Psi_j(U)$ is the sum of all possible degree-$d$ monomials of the processing times of the jobs belonging to players from $U$ on machine $j$, with each term in the sum having a coefficient of $d!$. \Ccoord\ schedules the job of player $u_i$ on machine $j$ in an assignment $N$ so that its completion time is 
\begin{eqnarray*}
{\cal P}(u_i,N_j) &=& \left(\frac{w_{u_i,j}\Psi_j(N_j)}{w_{u_i}}\right)^{1/d}.
\end{eqnarray*}
 
We will extend \Ccoord\ to define a broad class of coordination mechanisms; we use ${\cal M}(d)$ to refer to this class, where $d\geq 2$ is a positive integer. Each member of ${\cal M}(d)$ is identified by a {\em coefficient function} $\gamma$. The coefficient functions are defined over multi-sets of non-negative integers that have sum equal to $d+1$ and take non-negative values. An important property of the coefficient functions is that they are {\em invariant to zeros} that requires that for a multi-set $A$ of integers that sum up to $d+1$, $\gamma(A)=\gamma(A\cup \{0\})$. Hence, the value returned by $\gamma$ depends only on the non-zero elements in the multiset it takes as argument.

The definition of a coordination mechanism in ${\cal M}(d)$ uses the quantity $\Lambda_{u_i,j}(U)$, which is defined as follows for a machine $j$ and a job $u_i$ from a subset of jobs $U=\{u_1, u_2, ..., u_\ell\}$:
\begin{eqnarray}\label{eq:Lambda-player}
\Lambda_{u_i,j}(U) &=& \sum_{\begin{subarray}{c} t_1+t_2+...+t_\ell=d+1\\t_i\geq 1\end{subarray}}{\gamma(\{t_1, t_2, ..., t_\ell\})\prod_{k=1}^\ell{w_{u_k,j}^{t_k}}}.
\end{eqnarray}
The sum runs over all multi-sets of non-negative integers, with each integer corresponding to a distinct player of $U$, so that the integer $t_i$ corresponding to player $u_i$ is strictly positive. Notice that $\gamma$ is defined over (unordered) multi-sets; this implies that symmetric monomials have the same coefficient. For example, for the set of players $U=\{u_1,u_2\}$ and a machine $j$, 
$$\Lambda_{u_1,j}(U)=\gamma(\{3,0\})w^3_{{u_1},j}+\gamma(\{2,1\})w^2_{u_1,j}w_{u_2,j}+\gamma(\{1,2\})w_{u_1,j}w^2_{u_2,j}.$$ Clearly, $\{2,1\}$ and $\{1,2\}$ denote the same multi-set and, hence, the coefficients of the (symmetric) second and third monomial are identical.

A coordination mechanism of ${\cal M}(d)$ sets the completion time of player $u_i$ to 
\begin{eqnarray}\label{eq:completion-time}
{\cal P}(u_i,N_j) &=& \left(\frac{\Lambda_{u_i,j}(N_j)}{w_{u_i}}\right)^{1/d}.
\end{eqnarray}
when its job is scheduled on machine $j$ under assignment $N$. 

By simply setting $\gamma(A)=d!$ for every multi-set $A$ of non-negative integers summing up to $d+1$, we obtain \Ccoord. Indeed, it is easy to see that $\Lambda_{u_i,j}(U)=w_{u_i,j}\Psi_j(U)$ in this case.

The definition of ${\cal M}(d)$ guarantees that all its members satisfy two important properties. First, every coordination mechanism in ${\cal M}(d)$ is anonymous. This is due to the fact that the definition of the completion time in (\ref{eq:completion-time}) does not depend on the identity of a player and the jobs of two different players $u$ and $u'$ that have equal processing times $w_{u,j}=w_{u',j}$ at machine $j$ and the same minimum processing time (over all machines) will enjoy identical completion times therein, when each is scheduled together with a set $U$ of other players (i.e., ${\cal P}(u,U\cup \{u\})={\cal P}(u',U\cup \{u'\})$) or when the set of players $N_j$ assigned to machine $j$ contains both $u$ and $u'$ (${\cal P}(u,N_j)={\cal P}(u',N_j)$ in this case).

Another important property of the coordination mechanisms in ${\cal M}(d)$ is that they always induce potential games. We will prove this is a while, after defining the function $\Lambda_j(U)$, again for a machine $j$ and a set of players $U=\{u_1, u_2, ..., u_\ell\}$, as follows:
\begin{eqnarray}\label{eq:Lambda}
\Lambda_{j}(U) &=& \sum_{\begin{subarray}{c} t_1+t_2+...+t_\ell=d+1 \end{subarray}} \gamma(\{t_1,t_2, ..., t_\ell\}) \prod_{k=1}^{\ell} {w}^{t_{k}}_{u_k,j}.
\end{eqnarray}
Compared to the definition of $\Lambda_{u_i,j}(U)$ in (\ref{eq:Lambda-player}), the sum in (\ref{eq:Lambda}) runs just over all multi-sets of non-negative integers (corresponding to players in $U$) that sum up to $d+1$, without any additional constraint.

We will sometimes use the informal term $\Lambda$-functions to refer to the functions defined in both (\ref{eq:Lambda-player}) and (\ref{eq:Lambda}). We can now state and prove the following property of $\Lambda$-functions that we will use several times in our analysis below. For example, it will be particularly useful in order to prove that mechanisms of ${\cal M}(d)$ induce potential games (in Theorem \ref{thm:potential-games}).

\begin{lemma}\label{lem:pot-equality}
Consider a machine $j$ and a set of players $U=\{u_1, u_2, ..., u_\ell\}$. Then, for every player $u_i\in U$,
\begin{eqnarray*}
\Lambda_j(U) &=& \Lambda_{u_i,j}(U) +\Lambda_j(U\setminus \{u_i\}).
\end{eqnarray*}
\end{lemma}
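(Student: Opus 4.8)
The plan is to prove the identity by a direct decomposition of the unconstrained sum defining $\Lambda_j(U)$ according to the exponent attached to player $u_i$. Recall that $\Lambda_j(U)$ ranges over all multi-sets $\{t_1,\dots,t_\ell\}$ of non-negative integers (one integer $t_k$ per player $u_k\in U$) with $t_1+\dots+t_\ell=d+1$, whereas $\Lambda_{u_i,j}(U)$ is the same sum restricted to those tuples with $t_i\ge 1$. The obvious first step is therefore to split the index set of $\Lambda_j(U)$ into two disjoint families: the tuples with $t_i\ge 1$ and the tuples with $t_i=0$.

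For the first family, I would simply observe that the partial sum over all tuples with $t_i\ge 1$ is, term by term, exactly the defining expression $\Lambda_{u_i,j}(U)$ from \eqref{eq:Lambda-player}. Nothing needs to be manipulated here; it is literally the restriction of the same summand to the constraint $t_i\ge 1$.

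The interesting part is the second family, where $t_i=0$. For such a tuple the factor $w_{u_i,j}^{t_i}=w_{u_i,j}^{0}=1$ disappears from the product $\prod_{k=1}^{\ell}w_{u_k,j}^{t_k}$, leaving precisely $\prod_{k\neq i}w_{u_k,j}^{t_k}$. Moreover, since $t_i=0$, the remaining exponents $(t_k)_{k\neq i}$ range over exactly all non-negative integer tuples for the players of $U\setminus\{u_i\}$ summing to $d+1$, which is the index set of $\Lambda_j(U\setminus\{u_i\})$. The one thing to check is that the coefficients match: the term carries $\gamma(\{t_1,\dots,t_{i-1},0,t_{i+1},\dots,t_\ell\})$, and by the \emph{invariant to zeros} property, namely $\gamma(A)=\gamma(A\cup\{0\})$, this equals $\gamma(\{t_1,\dots,t_{i-1},t_{i+1},\dots,t_\ell\})$, which is exactly the coefficient appearing in the definition \eqref{eq:Lambda} of $\Lambda_j(U\setminus\{u_i\})$. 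Hence the $t_i=0$ partial sum is term-for-term equal to $\Lambda_j(U\setminus\{u_i\})$.

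Adding the two partial sums yields $\Lambda_j(U)=\Lambda_{u_i,j}(U)+\Lambda_j(U\setminus\{u_i\})$, completing the argument. I expect no genuine obstacle here; the only point requiring care is the coefficient-matching in the $t_i=0$ case, where the invariance-to-zeros assumption on $\gamma$ is exactly what is needed to identify $\gamma$ of the padded multi-set with $\gamma$ of the reduced one. A minor bookkeeping subtlety worth stating explicitly is that $\gamma$ acts on \emph{unordered} multi-sets, so dropping the zero entry produces precisely the multi-set indexing the reduced sum, with no reindexing ambiguity.
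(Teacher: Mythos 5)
Your proof is correct and follows exactly the paper's own argument: split the sum defining $\Lambda_j(U)$ into the tuples with $t_i\geq 1$ (giving $\Lambda_{u_i,j}(U)$ by definition) and those with $t_i=0$ (giving $\Lambda_j(U\setminus\{u_i\})$ via the invariance-to-zeros property of $\gamma$). The paper does the same decomposition, merely fixing $i=1$ without loss of generality for notational convenience.
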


\begin{proof}
Without loss of generality, let us assume $i=1$. Using the definition of $\Lambda$-functions in (\ref{eq:Lambda-player}) and (\ref{eq:Lambda}), we obtain
\begin{eqnarray*}
\Lambda_{j}(U) & = & \sum_{\begin{subarray}{c} t_1+t_2+...+t_\ell=d+1 \end{subarray}}{\gamma(\{t_1, t_2, ..., t_\ell\}) \prod_{k=1}^{\ell}{w^{t_k}_{u_k,j}}}\\
 & = &  \sum_{\begin{subarray}{c} t_1+t_2+...+t_\ell=d+1\\t_1\geq 1 \end{subarray}}{\gamma(\{t_1, t_2, ..., t_\ell\}) \prod_{k=1}^{\ell}{w^{t_k}_{u_k,j}}} \\
& & +  \sum_{\begin{subarray}{c} t_1+t_2+...+t_\ell=d+1\\t_1=0 \end{subarray}}{\gamma(\{t_1, t_2, ..., t_\ell\}) \prod_{k=1}^{\ell}{w^{t_k}_{u_k,j}}}\\
& = &   \Lambda_{u_1,j}(U)   +  \sum_{\begin{subarray}{c} t_2+...+t_\ell=d+1 \end{subarray}}{\gamma(\{t_2, ..., t_\ell\}) \prod_{k=2}^{\ell}{w^{t_k}_{u_k,j}}}\\
& = & \Lambda_{u_{1},j}(U) + \Lambda_{j}(U\setminus \{u_1\}).
\end{eqnarray*}
In the third equality, we have used the fact that the coefficient function is invariant to zeros.
\end{proof}

\begin{theorem}\label{thm:potential-games}
The non-negative function $\Phi$, which is defined over assignments of players to machines as $\Phi(N)=\sum_j{\Lambda_j(N_j)}$, is a potential function for the game induced by any coordination mechanism in ${\cal M}(d)$.
\end{theorem}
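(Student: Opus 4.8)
The plan is to verify directly the sign condition from the definition of a potential function. Consider two assignments $N$ and $N'$ that differ only in the strategy of a single player $u$, who moves from machine $j$ (under $N$) to machine $j'$ (under $N'$); thus $N'_j=N_j\setminus\{u\}$, $N'_{j'}=N_{j'}\cup\{u\}$, and $N'_k=N_k$ for every other machine $k$. Since $\Phi(N)=\sum_j\Lambda_j(N_j)$ and only the summands for $j$ and $j'$ change, I would first write
$$\Phi(N')-\Phi(N)=\big(\Lambda_j(N'_j)-\Lambda_j(N_j)\big)+\big(\Lambda_{j'}(N'_{j'})-\Lambda_{j'}(N_{j'})\big).$$

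The key step is to collapse each of these two differences using Lemma \ref{lem:pot-equality}. Applying the lemma to player $u$ on machine $j$ gives $\Lambda_j(N_j)=\Lambda_{u,j}(N_j)+\Lambda_j(N_j\setminus\{u\})=\Lambda_{u,j}(N_j)+\Lambda_j(N'_j)$, so the first difference equals $-\Lambda_{u,j}(N_j)$. Applying it to player $u$ on machine $j'$ (where $u$ now sits inside $N'_{j'}$) gives $\Lambda_{j'}(N'_{j'})=\Lambda_{u,j'}(N'_{j'})+\Lambda_{j'}(N'_{j'}\setminus\{u\})=\Lambda_{u,j'}(N'_{j'})+\Lambda_{j'}(N_{j'})$, so the second difference equals $\Lambda_{u,j'}(N'_{j'})$. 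Hence the unchanged contributions cancel and $\Phi(N')-\Phi(N)=\Lambda_{u,j'}(N'_{j'})-\Lambda_{u,j}(N_j)$.

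It then remains to match this quantity against the change in $u$'s completion time. By (\ref{eq:completion-time}) we have ${\cal P}(u,N_j)=\big(\Lambda_{u,j}(N_j)/w_u\big)^{1/d}$ and ${\cal P}(u,N'_{j'})=\big(\Lambda_{u,j'}(N'_{j'})/w_u\big)^{1/d}$. Because $w_u=\min_j w_{u,j}$ is a constant of the instance and does not depend on the assignment, the \emph{same} positive denominator $w_u$ appears in both completion times, and the map $x\mapsto(x/w_u)^{1/d}$ is strictly increasing on the nonnegative reals. Since the $\Lambda$-functions are nonnegative, this yields $\mathrm{sign}\big({\cal P}(u,N'_{j'})-{\cal P}(u,N_j)\big)=\mathrm{sign}\big(\Lambda_{u,j'}(N'_{j'})-\Lambda_{u,j}(N_j)\big)=\mathrm{sign}\big(\Phi(N')-\Phi(N)\big)$, which is exactly the required property; nonnegativity of $\Phi$ itself is immediate from the nonnegativity of $\gamma$ and of the processing times.

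I do not expect a genuine obstacle: the whole argument is driven by the additive decomposition of Lemma \ref{lem:pot-equality}, which is tailored so that a single-player deviation leaves every machine other than the source and destination untouched and the remaining terms telescope. The only point deserving a little care is the monotonicity step, where one must observe that the common denominator $w_u$ is independent of the assignment, so that comparing completion times reduces cleanly to comparing the player-specific $\Lambda$ values; note that $\Phi$ is only a generalized (sign-preserving) potential here, since the actual difference is of the $d$th powers of completion times rather than of the completion times themselves, but this is all that the definition demands.
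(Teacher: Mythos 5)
Your proof is correct and follows essentially the same route as the paper's: both decompose the potential difference across the source and destination machines, apply Lemma \ref{lem:pot-equality} to telescope the unchanged parts, and conclude that $\Phi(N')-\Phi(N)=\Lambda_{u,j'}(N'_{j'})-\Lambda_{u,j}(N_j)$, which has the same sign as the change in $u$'s completion time because both completion times are obtained from these $\Lambda$ values by the same strictly increasing map $x\mapsto(x/w_u)^{1/d}$ with the assignment-independent denominator $w_u$. Your closing remark that $\Phi$ is an ordinal (sign-preserving) rather than exact potential matches precisely the paper's definition and its final step, which writes the difference as $w_u\left({\cal P}(u,N_{j})^d-{\cal P}(u,N'_{j'})^d\right)$.
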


\begin{proof}
Consider two assignments $N$ and $N'$ that differ in the assignment of a single player $u$. Assume that player $u$ is assigned to machine $j_1$ and $j_2$ in the assignments $N$ and $N'$, respectively.
Using the definition of function $\Phi$ and Lemma \ref{lem:pot-equality}, we have
\begin{eqnarray*}
\Phi(N)-\Phi(N') &=& \sum_j{\Lambda_j(N_j)} - \sum_j{\Lambda_j(N_j)}\\
&=& \Lambda_{j_1}(N_{j_1})+\Lambda_{j_2}(N_{j_2}) - \Lambda_{j_1}(N'_{j_1})-\Lambda_{j_2}(N'_{j_2})\\
&=& \Lambda_{u,j_1}(N_{j_1})+\Lambda_{j_1}(N_{j_1}\setminus\{u\})+\Lambda_{j_2}(N_{j_2})\\
& & - \Lambda_{j_1}(N'_{j_1})-\Lambda_{u,j_2}(N'_{j_2})-\Lambda_{j_2}(N'_{j_2}\setminus\{u\}).
\end{eqnarray*}
Now observe that $N_{j_1}\setminus\{u\}=N'_{j_1}$ and $N'_{j_2}\setminus\{u\}=N_{j_2}$. Hence, using this observation and the definition of the completion time for $u$ in assignments $N$ and $N'$, the above derivation becomes
\begin{eqnarray*}
\Phi(N)-\Phi(N') &=& \Lambda_{u,j_1}(N_{j_1}) - \Lambda_{u,j_2}(N'_{j_2})\\
&=& w_u\left({\cal P}(u,N_{j_1})^d-{\cal P}(u,N'_{j_2})^d\right),
\end{eqnarray*}
which implies that the difference in the potentials and the difference ${\cal P}(u,N_{j_1})-{\cal P}(u,N'_{j_2})$ in the completion time of the deviating player $u$ in the two assignments have the same sign as desired.
\end{proof}

\section{The coordination mechanism \Dcoord}\label{sec:dcoord}
Like \Ccoord, our new coordination mechanism \Dcoord\ belongs to class ${\cal M}(d)$. It uses the coefficient function defined as 
\begin{eqnarray*}
\gamma(\{t_1, t_2, ..., t_\ell\}) &=& \left\{\begin{array}{ll} 1 & \mbox{if $\exists i$ such that $t_i=d+1$}\\ \frac{d!d}{t_1!t_2!...t_\ell!} & \mbox{otherwise}\end{array}\right.
\end{eqnarray*}
for every multi-set of integers $\{t_1, t_2, ..., t_\ell\}$ such that $t_1+t_2+...+t_\ell=d+1$.

Observe that $\gamma(\{t_1,t_2, ..., t_\ell\})$ is very similar (but not identical) to the multinomial coefficient defined as ${d+1 \choose {t_1, t_2, ..., t_\ell}}=\frac{(d+1)!}{t_1!\ldots t_\ell!}$. This is exploited in the proof of the next statement.

\begin{lemma}\label{lem:lambda-vs-load}
Consider a machine $j$ and a subset of players $U=\{u_1, u_2, ..., u_\ell\}$. Then,
\begin{eqnarray*}
\Lambda_j(U) &=& \frac{d}{d+1}L_j(U)^{d+1}+\frac{1}{d+1}\sum_{u\in U }{w_{u,j}^{d+1}}.
\end{eqnarray*}
\end{lemma}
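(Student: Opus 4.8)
The plan is to tie the coefficient function $\gamma$ of \Dcoord\ directly to the multinomial coefficient ${d+1 \choose {t_1, \ldots, t_\ell}}=\frac{(d+1)!}{t_1!\cdots t_\ell!}$, as the surrounding text already hints. The key observation is a clean decomposition of $\gamma$. In the ``otherwise'' branch (every $t_i\le d$), one computes directly
$$\gamma(\{t_1,\ldots,t_\ell\})=\frac{d\cdot d!}{t_1!\cdots t_\ell!}=\frac{d}{d+1}\cdot\frac{(d+1)!}{t_1!\cdots t_\ell!}=\frac{d}{d+1}{d+1 \choose {t_1,\ldots,t_\ell}}.$$
In the special branch some $t_i=d+1$; since the exponents sum to $d+1$, this forces the multiset to be $\{d+1,0,\ldots,0\}$, whose multinomial coefficient is $1$, while $\gamma=1=\frac{d}{d+1}+\frac{1}{d+1}$. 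Thus, in every case,
$$\gamma(\{t_1,\ldots,t_\ell\})=\frac{d}{d+1}{d+1 \choose {t_1,\ldots,t_\ell}}+r(\{t_1,\ldots,t_\ell\}),$$
where the remainder $r$ equals $\frac{1}{d+1}$ precisely on the pure-power multisets $\{d+1,0,\ldots,0\}$ and vanishes otherwise.

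First I would substitute this identity into the definition of $\Lambda_j(U)$ from~(\ref{eq:Lambda}) and split the resulting sum into two pieces. The piece carrying $\frac{d}{d+1}{d+1 \choose {t_1,\ldots,t_\ell}}$ is, by the multinomial expansion of $\left(\sum_{k=1}^\ell w_{u_k,j}\right)^{d+1}$ together with $L_j(U)=\sum_k w_{u_k,j}$, exactly $\frac{d}{d+1}L_j(U)^{d+1}$. For the piece carrying $r$, I would observe that the only exponent vectors with $\sum_k t_k=d+1$ that put some $t_i=d+1$ are those with a single exponent equal to $d+1$ and all others zero; each contributes the monomial $w_{u_i,j}^{d+1}$, so this piece equals $\frac{1}{d+1}\sum_{u\in U}w_{u,j}^{d+1}$. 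Summing the two pieces gives the claimed identity.

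The routine part is the multinomial-theorem computation. The step demanding genuine care is the decomposition of $\gamma$: one must verify that the special ``$t_i=d+1$'' branch is triggered by exactly the pure-power multisets (it cannot be triggered by two distinct large exponents, since two values of size $d+1$ would already force the sum to exceed $d+1$) and that on those multisets the remainder is the single constant $\frac{1}{d+1}$. This correction is exactly the point of the departure of \Dcoord\ from the pure multinomial coefficients of \Ccoord, producing the extra $\frac{1}{d+1}\sum_{u}w_{u,j}^{d+1}$ term that distinguishes the two mechanisms.
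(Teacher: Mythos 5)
Your proposal is correct and follows essentially the same route as the paper's proof: the paper likewise rewrites $\gamma$ as $\frac{d}{d+1}{d+1 \choose t_1,\ldots,t_\ell}$ plus a $\frac{1}{d+1}$ correction on the pure-power exponent vectors, and then applies the multinomial theorem to recover $\frac{d}{d+1}L_j(U)^{d+1}$. Your write-up just makes explicit (via the remainder function $r$) the coefficient bookkeeping that the paper compresses into a single displayed equality.
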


\begin{proof}
By the definition of $\Lambda_j(U)$ and the coefficient function $\gamma$, we have 
\begin{eqnarray*}
\Lambda_j(U) &=& \sum_{t_1+t_2+...+t_\ell=d+1}{\gamma(\{t_1,t_2, ..., t_\ell\})\prod_{k=1}^\ell{w_{u_k,j}^{t_k}}}\\
&=& \frac{d}{d+1}\sum_{t_1+t_2+...+t_\ell=d+1}{{d+1\choose {t_1, t_2, ..., t_\ell}}\prod_{k=1}^\ell{w_{u_k,j}^{t_k}}}+\frac{1}{d+1}\sum_{u\in U}{w_{u,j}^{d+1}}\\
&=& \frac{d}{d+1}L_j(U)^{d+1}+\frac{1}{d+1}\sum_{u\in U}{w_{u,j}^{d+1}}
\end{eqnarray*}
as desired.
\end{proof}

We proceed with two properties which relate $\Lambda$-functions to machine loads. The first one follows as a trivial corollary of Lemma \ref{lem:lambda-vs-load} after observing that $\sum_{u\in U }{w_{u,j}^{d+1}}\leq L_j(U)^{d+1}$. 

\begin{corollary}\label{cor:lambda-vs-load}
Consider a machine $j$ and a set of players $U$. Then,
\begin{eqnarray*}
\frac{d}{d+1} L_j(U)^{d+1} &\leq & \Lambda_j(U) \leq L_j(U)^{d+1}.
\end{eqnarray*}
\end{corollary}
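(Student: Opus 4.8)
The plan is to derive this corollary directly from Lemma~\ref{lem:lambda-vs-load}, which gives the exact identity $\Lambda_j(U) = \frac{d}{d+1}L_j(U)^{d+1}+\frac{1}{d+1}\sum_{u\in U}{w_{u,j}^{d+1}}$. The entire argument reduces to sandwiching the correction term $\frac{1}{d+1}\sum_{u\in U}{w_{u,j}^{d+1}}$ between $0$ and $\frac{1}{d+1}L_j(U)^{d+1}$. Since all processing times are positive, the lower bound on this term is immediate: the sum is non-negative, so dropping it yields $\Lambda_j(U)\geq \frac{d}{d+1}L_j(U)^{d+1}$.

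For the upper bound, the key inequality to establish is $\sum_{u\in U}{w_{u,j}^{d+1}}\leq L_j(U)^{d+1}$. Recalling that $L_j(U)=\sum_{u\in U}{w_{u,j}}$ is a sum of non-negative terms and that $d+1\geq 1$, this follows from the elementary fact that for non-negative reals $x_1,\dots,x_\ell$ and exponent $q\geq 1$, one has $\sum_i x_i^q \leq \left(\sum_i x_i\right)^q$ (superadditivity of $t\mapsto t^q$). Indeed each $w_{u,j}^{d+1}$ is dominated by $L_j(U)^{d+1}$ term-by-term in the expansion, since every individual term $w_{u,j}$ is at most the whole load. Substituting this bound into the identity from Lemma~\ref{lem:lambda-vs-load} gives
\begin{eqnarray*}
\Lambda_j(U) &\leq& \frac{d}{d+1}L_j(U)^{d+1}+\frac{1}{d+1}L_j(U)^{d+1} = L_j(U)^{d+1},
\end{eqnarray*}
which is exactly the claimed upper bound.

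There is essentially no obstacle here: both bounds are mechanical consequences of the exact formula already proven, and the only non-trivial ingredient is the term-wise domination $\sum_{u\in U}{w_{u,j}^{d+1}}\leq L_j(U)^{d+1}$, which the excerpt itself flags as the observation underlying the corollary. I would therefore present this as a short two-line proof, stating the non-negativity of the sum for the left inequality and invoking the superadditivity bound for the right inequality, then combining both with the identity of Lemma~\ref{lem:lambda-vs-load}.
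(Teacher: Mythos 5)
Your proof is correct and follows essentially the same route as the paper: the paper also derives the corollary directly from Lemma~\ref{lem:lambda-vs-load}, noting that the lower bound comes from dropping the non-negative term $\frac{1}{d+1}\sum_{u\in U}{w_{u,j}^{d+1}}$ and the upper bound from the observation $\sum_{u\in U}{w_{u,j}^{d+1}}\leq L_j(U)^{d+1}$. Nothing is missing; your write-up just makes explicit the superadditivity argument that the paper leaves as a one-line remark.
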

The second one will be very useful in proving that \Dcoord\ is feasible and in bounding its price of anarchy.

\begin{lemma}\label{lem:sandwitch}
Let $U=\{u_1, ..., u_\ell\}$ be a set of players. For every player $u_i\in U$ and every machine $j$, it holds that
\begin{eqnarray*}
w_{u_i,j}L_j(U)^d &\leq& \Lambda_{u_i,j}(U) \leq  d\cdot w_{u_i,j}L_j(U)^d.
\end{eqnarray*}
\end{lemma}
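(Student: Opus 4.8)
The plan is to compare $\Lambda_{u_i,j}(U)$ against $w_{u_i,j}L_j(U)^d$ one monomial at a time. First I would expand $L_j(U)^d=\left(\sum_{k=1}^\ell w_{u_k,j}\right)^d$ by the multinomial theorem, which gives $w_{u_i,j}L_j(U)^d=\sum_{s_1+\ldots+s_\ell=d}{d \choose {s_1,\ldots,s_\ell}}\,w_{u_i,j}\prod_{k=1}^\ell w_{u_k,j}^{s_k}$. Multiplying the distinguished factor $w_{u_i,j}$ into each monomial raises the exponent of $w_{u_i,j}$ from $s_i$ to $s_i+1$, so the substitution $t_i=s_i+1$ and $t_k=s_k$ for $k\neq i$ is a bijection from $\{s_1+\ldots+s_\ell=d\}$ onto exactly the index set $\{t_1+\ldots+t_\ell=d+1,\ t_i\ge 1\}$ over which $\Lambda_{u_i,j}(U)$ sums in (\ref{eq:Lambda-player}). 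After this re-indexing, both $w_{u_i,j}L_j(U)^d$ and $\Lambda_{u_i,j}(U)$ are non-negative combinations of the very same monomials $\prod_k w_{u_k,j}^{t_k}$, so it suffices to establish both inequalities coefficient by coefficient.

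Next I would read off the two coefficients of a fixed monomial. The one arising from $w_{u_i,j}L_j(U)^d$ is ${d \choose {t_1,\ldots,t_i-1,\ldots,t_\ell}}=\frac{d!}{t_1!\cdots(t_i-1)!\cdots t_\ell!}=t_i\cdot\frac{d!}{t_1!\cdots t_\ell!}$, while the one in $\Lambda_{u_i,j}(U)$ is $\gamma(\{t_1,\ldots,t_\ell\})$. I would then split on the definition of $\gamma$ for \Dcoord. In the generic branch (no exponent equals $d+1$) we have $\gamma=d\cdot\frac{d!}{t_1!\cdots t_\ell!}$, so, up to the common positive factor $\frac{d!}{\prod_k t_k!}$, the coefficients contributed to $w_{u_i,j}L_j(U)^d$, to $\Lambda_{u_i,j}(U)$, and to $d\cdot w_{u_i,j}L_j(U)^d$ stand in the ratio $t_i:d:d\,t_i$. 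Hence the desired chain reduces, for this monomial, to the elementary bound $1\le t_i\le d$: the lower estimate $t_i\ge 1$ is the summation constraint, and $t_i\le d$ holds precisely because we are in the branch where no exponent reaches $d+1$.

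The only point needing extra care — and the sole place the special branch of $\gamma$ is used — is the boundary monomial in which some exponent equals $d+1$. Since the $t_k$ sum to $d+1$, this forces one exponent to equal $d+1$ and all others to vanish, and the constraint $t_i\ge 1$ then pins the surviving exponent to be $t_i=d+1$ itself. There the shifted multinomial coefficient is $\frac{d!}{((d+1)-1)!}=1$ and $\gamma=1$ by definition, so the lower bound holds with equality, while the upper bound becomes $1\le d$, true since $d\ge 2$. Summing the verified term-by-term inequalities over all monomials yields the claimed double inequality. I expect the re-indexing bijection of the first step to be the crux: once the monomials of $w_{u_i,j}L_j(U)^d$ are aligned with those of $\Lambda_{u_i,j}(U)$, everything collapses to the bound $1\le t_i\le d$ together with the single boundary check.
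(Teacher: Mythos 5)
Your proof is correct and follows essentially the same route as the paper's: the paper pulls the factor $w_{u_i,j}$ out of $\Lambda_{u_i,j}(U)$ to re-index its sum over $\{t_1+\ldots+t_\ell=d\}$, which is exactly your bijection read in the opposite direction, and then compares coefficients monomial by monomial with the same case split (exponent of $u_i$ maximal versus not) and the same elementary bound, since the paper's ratio $\frac{d}{t_1+1}\in[1,d]$ is precisely your $1\le t_i\le d$ after the shift $t_i=t_1+1$.
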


\begin{proof}
We will first expand the quantities $w_{u_i,j}L_j(U)^d$ and $\Lambda_{u_i,j}(U)$. We have
\begin{eqnarray}\nonumber
\Lambda_{u_1,j}(N_j) &=& \sum_{\begin{subarray}{c} t_1+t_2+...+t_\ell=d+1\\t_1\geq 1\end{subarray}}{\gamma(\{t_1, t_2, ..., t_\ell\})\prod_{k=1}^\ell{w_{u_k,j}^{t_k}}}\\\label{eq:expand1}
&=& w_{u_1,j} \cdot \sum_{\begin{subarray}{c} t_1+t_2+...+t_\ell=d\end{subarray}}{\gamma(\{t_1+1, t_2, ..., t_\ell\})\prod_{k=1}^\ell{w_{u_k,j}^{t_k}}}
\end{eqnarray}
and
\begin{eqnarray}\nonumber
w_{u_1,j} \cdot L_j(N_j) &=& w_{u_1,j} \cdot \left(\sum_{k=1}^\ell{w_{u_k,j}}\right)^d\\\label{eq:expand2}
&=& w_{u_1,j} \cdot \sum_{\begin{subarray}{c} t_1+t_2+...+t_\ell=d\end{subarray}}{{d \choose {t_1, t_2, ..., t_\ell}}\prod_{k=1}^\ell{w_{u_k,j}^{t_k}}}.
\end{eqnarray}

We can prove the two desired inequalities by comparing the corresponding coefficients of each monomial in 
equations (\ref{eq:expand1}) and (\ref{eq:expand2}). Recall that, when $t_1+t_2+...+t_\ell=d$, the coefficient $\gamma(\{t_1+1, t_2, ..., t_\ell\})$ from (\ref{eq:expand1}) is equal to $1$ when $t_1=d$. In this case, the corresponding coefficient in (\ref{eq:expand2}) is ${d \choose {d,0,...,0}}=1$ as well.
Otherwise, 
\begin{eqnarray*}
\gamma(\{t_1+1, t_2, ..., t_\ell\}) &=& \frac{d}{t_1+1}{d\choose {t_1,t_2, ..., t_\ell}}.
\end{eqnarray*}
Since $t_1$ is non-negative and at most $d-1$, we have that 
\begin{eqnarray*}
{d\choose {t_1,t_2, ..., t_\ell}} &\leq & \gamma(\{t_1+1, t_2, ..., t_\ell\}) \leq d \cdot {d\choose {t_1,t_2, ..., t_\ell}},
\end{eqnarray*}
which concludes the proof.
\end{proof}

Feasibility follows easily now.

\begin{theorem}\label{thm:feasible}
\Dcoord\ produces feasible schedules. 
\end{theorem}

\begin{proof}
Consider player $u_1$ and any assignment $N$ which assigns it to machine $j$ together with $\ell-1$ other players $u_2, u_3, ..., u_\ell$. By the leftmost inequality of Lemma \ref{lem:sandwitch}, we have that
\begin{eqnarray*}
{\cal P}(u_1, N_j) &=& \left(\frac{\Lambda_{u_1,j}(N_j)}{w_{u_1}}\right)^{1/d} \geq \left(\frac{w_{u_1,j}}{w_{u_1}}\right)^{1/d}L_j(N_j) \geq  L_j(N_j),
\end{eqnarray*}
as desired. The inequality holds since, by definition, $w_{u_1,j}\geq w_{u_1}$.
\end{proof}

\section{Bounding the price of anarchy and stability}\label{sec:anal}
For proving the price of anarchy bound, we will need the following lemma which relates the load of any machine at an equilibrium with the optimal makespan.

\begin{lemma}\label{lem:complicated}
Let $N$ be an equilibrium and $N^*$ an assignment of optimal makespan. Then, for every machine $j$, it holds that
$$L_j(N) \leq m^{\frac{1}{d+1}}\frac{d+1}{\ln{2}}\norm{L(N^*)}_{\infty}.$$
\end{lemma}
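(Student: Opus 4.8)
The plan is to prove the stronger global statement $\norm{L(N)}_{d+1}\leq\frac{d+1}{\ln 2}\norm{L(N^*)}_{d+1}$ and then read off the per-machine bound. Indeed, for any fixed $j$, Lemma~\ref{lem:norm-max} gives both $L_j(N)\leq\norm{L(N)}_\infty\leq\norm{L(N)}_{d+1}$ and $\norm{L(N^*)}_{d+1}\leq m^{1/(d+1)}\norm{L(N^*)}_\infty$, so the two chain into exactly the claimed inequality. Writing $Y=\norm{L(N)}_{d+1}$ and $V=\norm{L(N^*)}_{d+1}$, everything reduces to showing $Y\leq\frac{d+1}{\ln 2}V$.

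To get there I would exploit the equilibrium conditions. For each player $u$ let $j_u$ and $j^*_u$ be its machine under $N$ and $N^*$; since $u$ does not prefer deviating to $j^*_u$, we have $\Lambda_{u,j_u}(N_{j_u})\leq\Lambda_{u,j^*_u}(N_{j^*_u}\cup\{u\})$. Summing over all players and bounding the left-hand side from below with the left inequality of Lemma~\ref{lem:sandwitch} yields $\sum_u\Lambda_{u,j_u}(N_{j_u})\geq\sum_j L_j(N_j)^d\sum_{u\in N_j}w_{u,j}=\sum_j L_j(N_j)^{d+1}=Y^{d+1}$. For the right-hand side I would group the players by their optimal machine $j'$ (so the group is $N^*_{j'}$) and, using Lemma~\ref{lem:pot-equality} together with Lemma~\ref{lem:lambda-vs-load}, bound each term by the marginal increase $\frac{d}{d+1}\left[(L_{j'}(N_{j'})+w_{u,j'})^{d+1}-L_{j'}(N_{j'})^{d+1}\right]+\frac{1}{d+1}w_{u,j'}^{d+1}$; convexity of $t\mapsto t^{d+1}$ makes this bound valid uniformly, even for the players already sitting on their optimal machine. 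Telescoping the first part over each group via Lemma~\ref{lem:conv} and then summing across machines with the Minkowski inequality (Lemma~\ref{lem:minkowski}) collapses the right-hand side to $\frac{d}{d+1}\left[(Y+V)^{d+1}-Y^{d+1}\right]+\frac{1}{d+1}V^{d+1}$, where I use $\sum_{u\in N^*_{j'}}w_{u,j'}=L_{j'}(N^*)$ and $\sum_{u\in N^*_{j'}}w_{u,j'}^{d+1}\leq L_{j'}(N^*)^{d+1}$.

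Combining the two sides and clearing denominators then gives the single scalar inequality $(2d+1)Y^{d+1}\leq d(Y+V)^{d+1}+V^{d+1}$. To finish I would argue by contradiction: if $Y>\frac{d+1}{\ln 2}V$, then $s:=V/Y<\frac{\ln 2}{d+1}$, and dividing through by $Y^{d+1}$ gives $2d+1\leq d(1+s)^{d+1}+s^{d+1}$; but $(1+s)^{d+1}<(1+\tfrac{\ln 2}{d+1})^{d+1}\leq e^{\ln 2}=2$ and $s^{d+1}<1$, so the right-hand side is strictly below $2d+1$, a contradiction. Hence $Y\leq\frac{d+1}{\ln 2}V$, which is what the global reduction required.

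I expect the main obstacle to be the right-hand-side estimate. One must replace each equilibrium term by the \emph{exact} marginal increase of $\Lambda_{j'}$ rather than the cruder factor-$d$ upper bound of Lemma~\ref{lem:sandwitch} (the latter would only yield an $O(d^2)$ ratio and hence a suboptimal price of anarchy), handle uniformly the players whose equilibrium and optimal machines coincide, and chain Lemma~\ref{lem:conv} with Minkowski so that the bound closes in terms of $Y$ and $V$ alone. The concluding scalar step is then routine once the key estimate $(1+\tfrac{\ln 2}{d+1})^{d+1}\leq 2$ is in hand.
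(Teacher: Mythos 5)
Your proof is correct, and it departs from the paper's in one genuine respect. Both arguments share the same skeleton: lower-bound $\sum_u \Lambda_{u,j_u}(N_{j_u})$ by $\norm{L(N)}_{d+1}^{d+1}$ via the left inequality of Lemma \ref{lem:sandwitch}, upper-bound each player's term by the \emph{exact} marginal increase of $\Lambda_{j'}$ (with the convexity trick covering players whose equilibrium and benchmark machines coincide), and collapse the sum via Lemma \ref{lem:conv} and Minkowski. The difference is the benchmark assignment players deviate to: the paper sends each player to her machine under the assignment $\opt{N}$ that minimizes the $\ell_{d+1}$-norm of loads, not under $N^*$. That choice buys the domination $\norm{L(\opt{N})}_{d+1}\leq\norm{L(N)}_{d+1}$, which absorbs the stray $\frac{1}{d+1}\norm{L(\opt{N})}_{d+1}^{d+1}$ term and yields the clean scalar inequality $2\norm{L(N)}_{d+1}^{d+1}\leq\left(\norm{L(N)}_{d+1}+\norm{L(\opt{N})}_{d+1}\right)^{d+1}$, solved directly via $2^{1/(d+1)}-1\geq\frac{\ln 2}{d+1}$; the paper then passes back to the makespan-optimal assignment through $\norm{L(\opt{N})}_{d+1}\leq\norm{L(N^*)}_{d+1}$. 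You instead deviate directly to $N^*$, where no such norm domination is available, keep the extra $V^{d+1}$ term, and close with a contradiction argument on $(2d+1)Y^{d+1}\leq d(Y+V)^{d+1}+V^{d+1}$, which works because $(1+s)^{d+1}<2$ and $s^{d+1}<1$ whenever $s<\frac{\ln 2}{d+1}$. Your route is slightly leaner---no third assignment is introduced and the final comparison to $\norm{L(N^*)}_{d+1}$ is immediate---while the paper's detour gives a cleaner closed-form scalar step; both finishes rest on the same estimate $\left(1+\tfrac{\ln 2}{d+1}\right)^{d+1}\leq 2$ and both land on the same constant $\frac{d+1}{\ln 2}$.
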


\begin{proof}
Consider a player $u$ that is assigned to machine $j$ in the equilibrium assignment $N$ and to machine $j'$ in the assignment $\opt{N}$ that minimizes the $l_{d+1}$-norm of the machine loads. First, consider the case where $j\not=j'$. In the equilibrium assignment $N$, player $u$ has no incentive to deviate from machine $j$ to machine $j'$ and, hence, ${\cal P}(u,N_j)\leq {\cal P}(u,N_{j'}\cup \{u\})$. By the definition of \Dcoord, we obtain that $\Lambda_{u,j}(N_j) \leq \Lambda_{u,j'}(N_{j'}\cup \{u\})$. Using this observation, Lemma \ref{lem:pot-equality}, and Lemma \ref{lem:lambda-vs-load}, we get
\begin{eqnarray*}
\Lambda_{u,j}(N_j) &\leq & \Lambda_{u,j'}(N_{j'}\cup \{u\}) = \Lambda_{j'}(N_{j'}\cup \{u\}) - \Lambda_{j'}(N_{j'})\\
&=& \frac{d}{d+1}L_{j'}(N_{j'}\cup \{u\})^{d+1} - \frac{d}{d+1}L_{j'}(N_{j'})^{d+1}+\frac{1}{d+1}w_{u,j'}^{d+1}\\
&=& \frac{d}{d+1}(L_{j'}(N_{j'})+w_{u,j'})^{d+1}-\frac{d}{d+1}L_{j'}(N_{j'})^{d+1}+\frac{1}{d+1}w_{u,j'}^{d+1}
\end{eqnarray*}
We will now prove that the same inequality holds when $j=j'$. In this case, together with Lemmas \ref{lem:pot-equality} and \ref{lem:lambda-vs-load}, we need to use a different argument that exploits a convexity property of polynomials. We have 
\begin{eqnarray*}
\Lambda_{u,j}(N_j) &= & \Lambda_{u,j'}(N_{j'}) = \Lambda_{j'}(N_{j'}) - \Lambda_{j'}(N_{j'}\setminus\{u\})\\
&=& \frac{d}{d+1}L_{j'}(N_{j'})^{d+1} - \frac{d}{d+1}L_{j'}(N_{j'}\setminus\{u\})^{d+1}+\frac{1}{d+1}w_{u,j'}^{d+1}\\
&=& \frac{d}{d+1}L_{j'}(N_{j'})^{d+1} - \frac{d}{d+1}(L_{j'}(N_{j'})-w_{u,j'})^{d+1}+\frac{1}{d+1}w_{u,j'}^{d+1}\\
&\leq & \frac{d}{d+1}(L_{j'}(N_{j'})+w_{u,j'})^{d+1}-\frac{d}{d+1}L_{j'}(N_{j'})^{d+1}+\frac{1}{d+1}w_{u,j'}^{d+1}.
\end{eqnarray*}
The last inequality follows since $z^{d+1}-(z-\alpha)^{d+1} \leq (z+\alpha)^{d+1}-z^{d+1}$ for every $z\geq \alpha\geq 0$, due to the convexity of the polynomial function $z^{d+1}$.

Let us sum the above inequality over all players. We obtain
\begin{eqnarray}\nonumber
& & \sum_j{\sum_{u\in N_j}{\Lambda_{u,j}(N_j)}}\\\nonumber
&\leq & \frac{d}{d+1}\sum_{j}{\sum_{u\in \opt{N}_{j}}{\left(\left(L_{j}(N_{j})+w_{u,j}\right)^{d+1}-L_{j}(N_{j})^{d+1}\right)}+\frac{1}{d+1}\sum_{j}\sum_{u\in \opt{N}_{j}}{{w_{u,j}^{d+1}}}}\\\nonumber
&\leq & \frac{d}{d+1}\sum_{j}{\left(\left(L_{j}(N_{j})+\sum_{u\in \opt{N}_{j}}{w_{u,j}}\right)^{d+1}-L_{j}(N_{j})^{d+1}\right)}+\frac{1}{d+1}\sum_{j}{L_{j}(\opt{N}_{j})^{d+1}}\\\nonumber
&=& \frac{d}{d+1}\norm{L(N)+L(\opt{N})}_{d+1}^{d+1} - \frac{d}{d+1}\norm{L(N)}_{d+1}^{d+1}+\frac{1}{d+1}\norm{L(\opt{N})}_{d+1}^{d+1}\\\label{eq:ineq1}
&\leq & \frac{d}{d+1}\left(\norm{L(N}_{d+1}+\norm{L(\opt{N})}_{d+1}\right)^{d+1}-\frac{d-1}{d+1}\norm{L(N)}_{d+1}^{d+1}.
\end{eqnarray}
The second inequality follows by Lemma \ref{lem:conv} and since $\sum_{u\in \opt{N}_{j}}{w_{u,j}^{d+1}}\leq L_{j}(\opt{N})^{d+1}$. The equality follows by the definition of $l_{d+1}$-norms and the last inequality follows by Minkowski inequality (Lemma \ref{lem:minkowski}) and by the fact that $\norm{L(\opt{N})}\leq \norm{L(N)}$.

Using the definition of norms and Lemma \ref{lem:lambda-vs-load}, we also have
\begin{eqnarray}\label{eq:ineq2}
\norm{L(N)}_{d+1}^{d+1} &=& \sum_j{L_j(N_j)^{d+1}} = \sum_j{\sum_{u\in N_j}{w_{u,j}L_j(N_j)^d}} \leq  \sum_j{\sum_{u\in N_j}{\Lambda_{u,j}}}.
\end{eqnarray}

By combining (\ref{eq:ineq1}) and (\ref{eq:ineq2}), we have 
\begin{eqnarray*}
2\norm{L(N)}_{d+1}^{d+1} &\leq & \left(\norm{L(N}_{d+1}+\norm{L(\opt{N})}_{d+1}\right)^{d+1}
\end{eqnarray*}
and, equivalently, 
\begin{eqnarray*}
\norm{L(N)}_{d+1} &\leq & \frac{1}{2^{\frac{1}{d+1}}-1}\norm{L(\opt{N})}_{d+1} \leq \frac{d+1}{\ln{2}}\norm{L(N^*)}_{d+1}\\
&\leq & m^{\frac{1}{d+1}} \frac{d+1}{\ln{2}}\norm{L(N^*)}_{\infty}.
\end{eqnarray*}
The second inequality follows since, by definition, $\norm{L(\opt{N})}_{d+1}\leq \norm{L(N^*)}_{d+1}$ and by the inequality $e^z\geq z+1$. The third inequality follows by Lemma \ref{lem:norm-max}. Since $\norm{L(N)}_{d+1}\geq L_j(N_j)$ for every machine $j$, the lemma follows.
\end{proof}

For the price of stability bound, we will use a qualitative similar (to Lemma \ref{lem:complicated}) relation between machine loads at a particular equilibrium and the optimal makespan.

\begin{lemma}\label{lem:potential-method}
Let $N$ be the equilibrium that minimizes the potential function and $N^*$ an assignment of optimal makespan. Then, for every machine $j$, it holds that
$$L_j(N_j) \leq \left(\frac{d+1}{d}m\right)^{\frac{1}{d+1}} \norm{L(N^*)}_\infty.$$
\end{lemma}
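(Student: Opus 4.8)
My plan is to exploit the fact that $N$ is singled out as the \emph{potential-minimizing} equilibrium, which lets me bypass the machine-by-machine deviation analysis of Lemma \ref{lem:complicated} entirely and instead compare $N$ to $N^*$ through the single scalar $\Phi$. The first step is to establish the key inequality $\Phi(N)\leq \Phi(N^*)$. By Theorem \ref{thm:potential-games}, $\Phi$ is a potential for the (finite) game induced by \Dcoord, so the assignment minimizing $\Phi$ over \emph{all} assignments must be an equilibrium: any profitable deviation would strictly decrease the completion time of the deviating player and hence, by the potential property, strictly decrease $\Phi$, contradicting minimality. Consequently the global minimizer of $\Phi$ is itself an equilibrium, so the equilibrium $N$ that minimizes $\Phi$ satisfies $\Phi(N)\leq \Phi(N')$ for every assignment $N'$, and in particular $\Phi(N)\leq \Phi(N^*)$.

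With this in hand, the next step is to sandwich $\Phi$ between $(d+1)$-norms of the load vectors using Corollary \ref{cor:lambda-vs-load}. On the left I would apply the lower bound $\frac{d}{d+1}L_j(N_j)^{d+1}\leq \Lambda_j(N_j)$ to the equilibrium $N$, and on the right the upper bound $\Lambda_j(N^*_j)\leq L_j(N^*_j)^{d+1}$ to $N^*$. Summing over all machines and using $\Phi(N)\leq\Phi(N^*)$ yields
\begin{eqnarray*}
\frac{d}{d+1}\norm{L(N)}_{d+1}^{d+1} \;\leq\; \Phi(N) \;\leq\; \Phi(N^*) \;\leq\; \norm{L(N^*)}_{d+1}^{d+1},
\end{eqnarray*}
which rearranges to $\norm{L(N)}_{d+1}^{d+1}\leq \frac{d+1}{d}\,\norm{L(N^*)}_{d+1}^{d+1}$.

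The final step is to pass from the $(d+1)$-norm of $N^*$ to its makespan and from the $(d+1)$-norm of $N$ down to a single machine load. For the former I would invoke the right-hand inequality of Lemma \ref{lem:norm-max}, $\norm{L(N^*)}_{d+1}\leq m^{1/(d+1)}\norm{L(N^*)}_\infty$; for the latter I would use the trivial bound $L_j(N_j)\leq \norm{L(N)}_{d+1}$ (the left-hand inequality of Lemma \ref{lem:norm-max} with $p=d+1$), valid for every machine $j$. Chaining these gives
\begin{eqnarray*}
L_j(N_j) \;\leq\; \norm{L(N)}_{d+1} \;\leq\; \left(\tfrac{d+1}{d}\right)^{1/(d+1)}\norm{L(N^*)}_{d+1} \;\leq\; \left(\tfrac{d+1}{d}\,m\right)^{1/(d+1)}\norm{L(N^*)}_\infty,
\end{eqnarray*}
which is exactly the claimed bound.

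I expect the only genuinely delicate point to be the justification of $\Phi(N)\leq\Phi(N^*)$, since everything after it is routine norm bookkeeping. The subtlety is merely conceptual rather than computational: one must argue that the \emph{equilibrium} minimizing $\Phi$ has potential no larger than that of the arbitrary optimal-makespan assignment $N^*$, which follows cleanly once one observes that the global minimizer of $\Phi$ is automatically an equilibrium (equivalently, one may run best-response dynamics from $N^*$ and use that $\Phi$ never increases). This is where the whole proof is markedly simpler than Lemma \ref{lem:complicated}, since I never need the per-player deviation inequalities, the Minkowski inequality, or the convexity lemma.
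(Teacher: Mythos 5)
Your proposal is correct and follows essentially the same route as the paper: establish $\Phi(N)\leq\Phi(N^*)$, sandwich $\Phi$ between $(d+1)$-th powers of the load norms via Corollary \ref{cor:lambda-vs-load}, and finish with Lemma \ref{lem:norm-max}. The only (immaterial) difference is in justifying $\Phi(N)\leq\Phi(N^*)$: you argue that the global minimizer of $\Phi$ is itself an equilibrium, while the paper invokes best-response dynamics started from $N^*$ --- two standard phrasings of the same observation.
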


\begin{proof}
Observe that $\Phi(N)\leq \Phi(N^*)$ since every equilibrium that is reached when players repeatedly best-respond starting from assignment $N^*$ has potential at most $\Phi(N^*)$. Using this observation, the definition of norms, Corollary \ref{cor:lambda-vs-load}, and the definition of the potential function (see the statement of Theorem \ref{thm:potential-games}), we have
\begin{eqnarray*}
\norm{L(N)}|_{d+1}^{d+1} &=& \sum_j{L_j(N_j)^{d+1}}  \leq \frac{d+1}{d}\sum_j{\Lambda_j(N_j)} = \frac{d+1}{d}\Phi(N)\\
&\leq & \frac{d+1}{d}\Phi(N^*)= \frac{d+1}{d} \sum_j{\Lambda_j(N^*_j)^{d+1}} \leq \frac{d+1}{d}\sum_j{L_j(N^*_j)^{d+1}}\\
&=& \frac{d+1}{d}\norm{L(N^*)}_{d+1}^{d+1}.
\end{eqnarray*}
Hence, for every machine $j$, by exploiting Lemma \ref{lem:norm-max}, we have $L_j(N)\leq \norm{L(N)}_{d+1} \leq \left(\frac{d+1}{d}\right)^{\frac{1}{d+1}}\norm{L(N^*)}_{d+1} \leq \left(\frac{d+1}{d}m\right)^{\frac{1}{d+1}}\norm{L(N^*)}_\infty$ as desired. 
\end{proof}

We are now ready to complete the price of anarchy/stability proofs. We will do so by comparing the completion time of any player to the optimal makespan $\norm{L(N^*)}_\infty$.

\begin{theorem}
By setting $d=O(\log{m})$, \Dcoord\ has price of anarchy $O(\log{m})$ and price of stability $O(1)$.
\end{theorem}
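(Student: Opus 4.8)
The plan is to bound, at the relevant equilibrium, the completion time of \emph{every} player by a constant (respectively, logarithmic) multiple of the optimal makespan $\norm{L(N^*)}_\infty$; since the social cost is the maximum completion time, this immediately yields the two ratios. The engine for both bounds is the control on machine loads already established: Lemma \ref{lem:complicated} for an arbitrary equilibrium (giving the price of anarchy) and Lemma \ref{lem:potential-method} for the potential-minimizing equilibrium (giving the price of stability). So the only genuinely new work is to convert a load bound on machines into a completion-time bound on players.

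The central difficulty is that the completion time ${\cal P}(u,N_j)=(\Lambda_{u,j}(N_j)/w_u)^{1/d}$ is normalized by $w_u=\min_j w_{u,j}$, whereas $\Lambda_{u,j}(N_j)$ scales like $w_{u,j}L_j(N_j)^d$ (Lemma \ref{lem:sandwitch}); bounding it directly would leave a factor $(w_{u,j}/w_u)^{1/d}$ that can be arbitrarily large. To kill this ratio I would exploit the equilibrium condition: fix a player $u$ on machine $j$, let $j_0$ be a machine with $w_{u,j_0}=w_u$, and use ${\cal P}(u,N_j)\le{\cal P}(u,N_{j_0}\cup\{u\})$. The point is that on $j_0$ the normalization is exact. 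Applying the upper half of Lemma \ref{lem:sandwitch} to the set $N_{j_0}\cup\{u\}$ gives $\Lambda_{u,j_0}(N_{j_0}\cup\{u\})\le d\,w_{u,j_0}L_{j_0}(N_{j_0}\cup\{u\})^d = d\,w_u\,L_{j_0}(N_{j_0}\cup\{u\})^d$, so
\[
{\cal P}(u,N_j)\le d^{1/d}\,L_{j_0}(N_{j_0}\cup\{u\})\le d^{1/d}\bigl(L_{j_0}(N_{j_0})+w_u\bigr).
\]
(When $j=j_0$ the first inequality is trivial and the same bound holds.) Finally I would note $w_u\le\norm{L(N^*)}_\infty$, since in $N^*$ the job of $u$ sits on some machine whose load is at least $w_{u,\cdot}\ge w_u$.

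It then remains to substitute the load bounds and pick $d$. For the price of anarchy, plugging Lemma \ref{lem:complicated} into the displayed inequality gives ${\cal P}(u,N_j)\le d^{1/d}\bigl(m^{1/(d+1)}\tfrac{d+1}{\ln 2}+1\bigr)\norm{L(N^*)}_\infty$; for the price of stability, plugging in Lemma \ref{lem:potential-method} gives ${\cal P}(u,N_j)\le d^{1/d}\bigl((\tfrac{d+1}{d}m)^{1/(d+1)}+1\bigr)\norm{L(N^*)}_\infty$. Choosing $d=\Theta(\log m)$ (e.g.\ $d=\lceil\ln m\rceil$) makes $m^{1/(d+1)}=O(1)$, while $d^{1/d}=O(1)$ holds universally; hence the price of stability bound collapses to $O(1)$, and in the price of anarchy bound the surviving factor $\tfrac{d+1}{\ln 2}=\Theta(\log m)$ produces the claimed $O(\log m)$.

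The main obstacle I anticipate is precisely the normalization mismatch described above: getting a clean completion-time bound requires the deviation-to-$j_0$ trick rather than a direct estimate, together with the observation $w_u\le\norm{L(N^*)}_\infty$. Everything after that is routine substitution, and the two balancing facts $m^{1/(d+1)}=O(1)$ and $d^{1/d}=O(1)$ under $d=\Theta(\log m)$ are what make the constant and logarithmic factors come out as stated.
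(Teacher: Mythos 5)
Your proposal is correct and follows essentially the same route as the paper's proof: deviate to the machine $j^*$ (your $j_0$) realizing $w_u$, apply the upper bound of Lemma \ref{lem:sandwitch} there so the normalization cancels, then substitute Lemma \ref{lem:complicated} (for any equilibrium) and Lemma \ref{lem:potential-method} (for the potential minimizer) and take $d=\Theta(\log m)$. Your only addition is making explicit the observation $w_u\leq\norm{L(N^*)}_\infty$, which the paper uses implicitly in the ``$+1$'' term.
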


\begin{proof}
Consider a player $u$ that is assigned to machine $j$ at some equilibrium $N$ and satisfies $w_u=w_{u,j^*}$ for some machine $j^*$. We will use the fact that player $u$ (is either already at or) has not incentive to deviate to machine $j^*$ at equilibrium to bound its completion time as follows:
\begin{eqnarray*}
{\cal P}(u,N_j) &\leq & {\cal P}(u,N_{j^*}\cup \{u\}) = \left(\frac{\Lambda_{u,j^*}(N_{j^*}\cup \{u\})}{w_u}\right)^{1/d}\\
&\leq & \left(\frac{dw_{u,j^*}L_{j^*}(N_{j^*}\cup \{u\})^d}{w_u}\right)^{1/d} \leq d^{1/d} (L_{j^*}(N_{j^*})+w_u).
\end{eqnarray*}
The equality follows by the definition of \Dcoord, and the second inequality follows by Lemma \ref{lem:sandwitch}. The third inequality follows since $w_u=w_{u,j^*}$ and by observing that $L_{j^*}(N_{j^*}\cup \{u\})=L(N_{j^*})+w_u$ if $u\not\in N_{j^*}$ (i.e., $j\not=j^*$) and $L_{j^*}(N_{j^*}\cup \{u\})=L(N_{j^*})$ otherwise. 

Now, using Lemma \ref{lem:complicated} to bound $L_{j^*}(N_{j^*})$, we obtain that
\begin{eqnarray*}
{\cal P}(u,N_j) &\leq & d^{1/d}\left(m^{\frac{1}{d+1}}\frac{d+1}{\ln{2}}+1\right)\norm{L(N^*)}_\infty.
\end{eqnarray*} 
If the equilibrium $N$ is a potential-minimizing assignment, Lemma \ref{lem:potential-method} can be further used to obtain the better guarantee
\begin{eqnarray*}
{\cal P}(u,N_j) &\leq & d^{1/d}\left(\left(\frac{d+1}{d}m\right)^{\frac{1}{d+1}}+1\right)\norm{L(N^*)}_\infty.
\end{eqnarray*} 
The theorem follows since, by setting $d=\Theta(\log{m})$, the factors (ignoring $\norm{L(N^*)}_\infty$ in the rightmost expressions become $O(\log{m})$ and $O(1)$, respectively. So, in general, we have that the completion time of any player at equilibrium is at most $O(\log{m})$ times the optimal makespan (hence, the price of anarchy bound) while there exists a particular equilibrium where the completion time of any player is at most $O(1)$ times the optimal makespan (hence, the price of stability bound).
\end{proof}




\small 

\end{document}